\newtheorem{theorem}{Theorem}
\newtheorem{corollary}[theorem]{Corollary}
\newtheorem{lemma}[theorem]{Lemma}
\newtheorem{definition}{Definition}
\newcommand{\R}{\mathbb{R}}
\newcommand{\C}{\mathbb{C}}
\newcommand{\toep}{\mathrm{Toep}}
\newcommand{\norm}[1]{\|#1\|}
\DeclareMathOperator{\rank}{rank}
\DeclareMathOperator{\diag}{diag}
\title{Low-Rank Toeplitz Matrix Estimation \protect\\  via Random Ultra-Sparse Rulers}
\author{\and \and \and \and \and Hannah Lawrence \\ Flatiron Institute\\ \texttt{hllawrence@gmail.com} \and Jerry Li \\ Microsoft Research \\ \texttt{jerrl@microsoft.com} \and \and \and \and \and Cameron Musco \\ UMass Amherst \\ \texttt{cmusco@cs.umass.edu} \and Christopher Musco \\ New York University \\ \texttt{cmusco@nyu.edu}}
\begin{document}
%
\maketitle
\begin{abstract}
We study how to estimate a nearly low-rank Toeplitz covariance matrix $T$ from compressed measurements. 
Recent work of Qiao and Pal addresses this problem by combining sparse rulers (sparse linear arrays) with frequency finding (sparse Fourier transform) algorithms applied to the Vandermonde decomposition of $T$. Analytical bounds on the sample complexity are shown, under the assumption of sufficiently large gaps between the frequencies in this decomposition.


In this work, we introduce \emph{random ultra-sparse rulers} and propose an improved approach based on these objects. Our random rulers effectively apply a random permutation to the frequencies in $T$'s Vandermonde decomposition, letting us avoid frequency gap assumptions and leading to improved sample complexity bounds. In the special case when $T$ is circulant, we theoretically analyze the performance of our method when combined with 
sparse Fourier transform algorithms based on random hashing. We also show experimentally that our ultra-sparse rulers give significantly more robust and sample efficient estimation then baseline methods. 
\end{abstract}
%
%
\clearpage

\section{Introduction}
\label{sec:intro}

We study the problem of estimating the $d \times d$ covariance matrix  $T \in \R^{d \times d}$ of a distribution $\mathcal{D}$ over $d$-dimensional vectors given independent samples $x^{(1)},x^{(2)},\ldots, x^{(n)} \in \R^d$ drawn from $\mathcal{D}$. In particular, we focus on the case when the covariance matrix $T$ is Toeplitz, which arises when the vectors are wide-sense stationary: the covariance $t_{|j-k|}$ between the $j^{th}$ and $k^{th}$ entries only depends on the distance $|j -k|$. We let $t_s$ denote the covariance at distance $s$ for $s \in \{0,\ldots,d-1\}$.

Toeplitz covariance estimation arises in a range of applications, including direction of arrival (DOA) estimation  \cite{KrimViberg:1996,wax1997joint,BogaleLe:2016}, spectrum-sensing for cognitive radio \cite{MaLiJuang:2009,CohenTsiperEldar:2018}, medical and radar imaging \cite{SnyderOSullivanMiller:1989, fuhrmann1991application,BrookesVrbaRobinson:2008}, \cite{CohenEldar:2018,RufSwiftTanner:1988,AslMahloojifar:2012} 
and Gaussian process regression (kriging) and kernel machine learning \cite{dietrich1991estimation,wilson2015kernel}.
We focus on estimation methods  with \emph{low sample complexity}, can be measured in two ways \cite{toeplitz}:

\paragraph{Entry Sample Complexity.}  How many entries of each sample $x^{(i)} \sim \mathcal{D}$ must be read? Minimizing entry sample complexity typically corresponds to minimizing sensor cost, as, in many applications, each entry of $x^{(i)}$ is measured with a different sensor in a spatial grid. We consider algorithms where the same entries are read in each  $x^{(i)}$ (i.e., the active sensors remain fixed).

\paragraph{Vector Sample Complexity.}  How many $d$-dimensional samples $x^{(i)}$ must be drawn from $\mathcal{D}$? Vector sample complexity corresponds to minimizing acquisition time or measurement cost and  is the classic notion of sample complexity in statistics and machine learning.

Typically  there is a trade-off between these two measures. In this work, we seek to \emph{minimize entry sample complexity}, while keeping vector sample complexity reasonably low.

\subsection{Sparse Ruler Based Sampling}
Our work centers on the powerful idea of sparse rulers (also known as sparse linear arrays), which let one perform covariance estimation with significantly reduced entry  sample complexity. A sparse ruler is a subset of indices $R \subseteq \{1,\ldots, d\}$, such that for every  distance $s \in \{0,1,\ldots, d-1\}$, there is some pair $i,j \in R$ with distance $| i - j| =  s$. The set of distances measured by $R$ is $R$'s \emph{difference coarray} or \emph{difference set} \cite{ErdosGal:1948,Leech:1956, Wichmann:1963,Moffet:1968,PillaiBar-NessHaber:1985}.
 It is clear that to represent $d$ distances, we must have $|R| \ge \sqrt{d}$ so that ${R \choose 2} \ge d$ 
and it is well known that for any $d$, there exists a sparse ruler matching this optimal size up to constants. A large body of work has studied the design of  sparse rulers under various additional objectives \cite{caratelli2011novel,qin2015generalized,cohen2019sparse}. We note that in some cases, which will arise later in this work, we may allow $R$ to be any set of integers, including those outside $\{1,\ldots, d\}$.

Sparse rulers have received significant attention in covariance estimation applications \cite{LexaDaviesThompson:2011,ArianandaLeus:2012,romero2016compressive,WuZhuYan:2017,toeplitz}.
Given a sample $x \sim \mathcal{D}$ with Toeplitz covariance matrix $T$, if we read the $|R|$ entries of $x$ corresponding to indices in a ruler $R$, we obtain an estimate of the covariance $t_s$ \emph{at every distance $s$}. So in principle, with enough samples, $x^{(1)},\ldots, x^{(n)} \sim \mathcal{D}$ we can accurately estimate $T$ while measuring just $|R| = O(\sqrt{d})$ entries in each sample (i.e., with $O(\sqrt{d})$ entry sample complexity). In fact, recent work has shown that, with sparse ruler measurements, $\tilde O(d/\epsilon^2)$ vector samples suffice to recover any Toeplitz matrix to accuracy $\epsilon$ in the spectral norm\cite{toeplitz}.

\subsection{Improved Bounds for Low-Rank Matrices}\label{sec:piya}

For general Toeplitz covariance matrices it is impossible to improve on the entry sample complexity achieved by sparse rulers: without reading at least $O(\sqrt{d})$ entries, we can never estimate the covariance at some distances.
However 
in many applications, such as DOA estimation, when the number of sources is smaller than the number of sensors, the Toeplitz covariance matrix of the received signal snapshots is \emph{low-rank}, or close to low-rank. This \emph{additional structure} can be leveraged to recover $T$ with a smaller subset of its entries \cite{abramovich1996positive,ChenChiGoldsmith:2015}.
Recent work of Qiao and Pal \cite{qiao2017gridless} shows that, if $T$ is approximately rank $k$ for any $k < d$, an entry sample complexity of just $O(\sqrt{k})$ can be achieved using sparse rulers. 
The high-level idea is easily understood: if $T$ is exactly rank-$k$, then it can be decomposed uniquely using the Carath\'{e}odory-Fej\'{e}r-Pisarenko decomposition (the Vandermonde decomposition) \cite{caratheodory1911zusammenhang} 
 as $T = F_T D F_T^*$, where $D \in \R^{k \times k}$ is a diagonal matrix and $F_T \in \C^{d \times k}$ is a Fourier matrix, with $F_T({m,\ell}) = e^{2 \pi i f_\ell \cdot (m-1)}$ for some set of frequencies $f_1,\ldots, f_k \in [0,1]$.

We can see immediately that the top left $k + 1 \times k+1$ principal submatrix of $T$, denoted $T_{k+1,k+1}$ 
(which is also Toeplitz, positive semidefinite, and rank $k$) 
admits a Vandermonde decomposition with  the same frequencies -- obtained by simply restricting $F$ to its first $k+1$ rows. Further, it can be shown that this decomposition is unique. Thus, we can recover the frequencies $f_1,\ldots, f_k$  and their weights $D$ just from a decomposition of $T_{k+1,k+1}$. Thus, from this small submatrix, we can recover  all of $T$!

With this observation in hand, Qiao and Pal  apply  sparse ruler methods to $T_{k+1,k+1}$ to obtain entry sample complexity just $O(\sqrt{k})$. The key difficulty is that the Vandermonde decomposition is notoriously unstable: noise in approximating $T_{k+1,k+1}$ and any deviation of $T$ from being exactly rank-$k$ (i.e., truly having just $k$ frequencies in its Vandermonde decomposition) can entirely change the frequency content of this decomposition. Nevertheless, Qiao and Pal prove a bound on reconstruction error, under the assumption that $f_1,\ldots, f_k$ have spacing at least $\Theta(1/k)$ and that the underlying MUSIC frequency-finding routine \cite{Schmidt:1981,Schmidt:1986} is exact. They give a vector sample complexity bound of roughly $O \left ({d^4}/{k^2 \epsilon^2} \right )$ to approximate all entries of $T$ up to error $\epsilon \cdot  t_0$, where $t_0$ is covariance at distance $0$ (and therefore the largest entry of $T$ since it is positive semidefinite).

\subsection{Our Contributions}\label{sec:contributions}
We propose the idea of \emph{random ultra-sparse rulers} to avoid the frequency gap assumption of Qiao and Pal, while simultaneously giving much lower  vector sample complexity with similar entry sample complexity. In the special case when $T$ is circulant (corresponding to frequencies in its Vandermonde decomposition being `on-grid' multiples of $1/d$), we prove a sample complexity bound that depends only \emph{logarithmically on the ambient dimension $d$}, and polynomially on the rank $k$ and error parameter $\epsilon$. Broadly, our \emph{random ultra-sparse rulers} open the door to achieving low entrywise sample complexity for circulant Toeplitz covariance estimation via a wider class of randomized sparse FFTs, providing more robust frequency recovery than deterministic techniques.


\begin{theorem}[Circulant Covariance Estimation]\label{thm:main}
Algorithm \ref{alg:alg} takes $\tilde O(k/\epsilon^2)$\footnote{$\tilde O(\cdot)$ hides log factors in the input parameters. For more precise bounds see Section \ref{sec:analysis}.} independent samples from any  sub-Gaussian distribution $\mathcal{D}$ on $\R^d$ with circulant covariance matrix $T$. The algorithm reads $\tilde O(\sqrt{k})$ entries from each sample and returns with probability at least $2/3$, $\tilde T \in \R^{d \times d}$ satisfying:
\begin{align*}
\norm{T - \tilde T}_F \le \epsilon \norm{T}_F + 2 \min_{\text{rank-$k$}\ B} \norm{T - B}_F.
\end{align*}
Throughout, $\norm{\cdot }_F$ denotes the Frobenius norm.
\end{theorem}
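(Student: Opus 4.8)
The plan is to reduce the problem to estimating the small $(k+1)\times(k+1)$ principal submatrix $T_{k+1,k+1}$ accurately, and then to exploit the circulant structure so that recovering $T$ amounts to an \emph{on-grid} sparse Fourier recovery problem rather than the unstable continuous Vandermonde decomposition. Since $T$ is circulant, its Vandermonde frequencies are integer multiples of $1/d$, so $T$ is diagonalized by the $d\times d$ DFT and is fully determined by its first row, equivalently by the length-$d$ spectrum $\wh t$ with $t_s = \frac{1}{d}\sum_j \wh t_j e^{2\pi i j s/d}$. Approximate rank $k$ means $\wh t$ is close to $k$-sparse. Thus the goal becomes: recover a $k$-sparse approximation to $\wh t$ from entry-subsampled covariance estimates, with $\tilde O(\sqrt k)$ entries per sample and $\tilde O(k/\epsilon^2)$ vector samples.

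First I would set up the statistical estimation of the measured covariances. Reading the entries of each $x^{(i)}$ indexed by a ruler $R$ with $|R| = \tilde O(\sqrt k)$ gives, for every distance $s$ in the difference coarray of $R$, an empirical average $\hat t_s = \frac{1}{n}\sum_i x^{(i)}_j x^{(i)}_{j+s}$ estimating $t_s$. A sub-Gaussian concentration / matrix-Bernstein argument shows that with $n = \tilde O(k/\epsilon^2)$ vector samples, each measured $\hat t_s$ concentrates around $t_s$, and the aggregate error over the $O(k)$ relevant distances is controlled; here the role of the \catchy{} is that a random ruler effectively applies a random permutation to the spectral support, so with high probability the measured distances probe a pseudorandom set of Fourier modes. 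Second, I would invoke the circulant reduction: because on-grid frequencies are determined by $T_{k+1,k+1}$, it suffices to accurately estimate the $O(k)$ covariance values feeding a sparse-FFT hashing scheme, and then run a randomized hashing-based sparse Fourier transform to extract the (at most) $k$ dominant modes of $\wh t$ together with their magnitudes.

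Third, I would assemble the error bound. Let $\tilde{\wh t}$ denote the $k$-sparse spectrum returned by the sparse-FFT routine and $\tilde T$ the circulant matrix it induces; by Parseval $\norm{T-\tilde T}_F = \sqrt d\,\norm{\wh t - \tilde{\wh t}}_2$ (up to the circulant normalization), so the Frobenius guarantee follows from an $\ell_2$ sparse-recovery guarantee of the form $\norm{\wh t - \tilde{\wh t}}_2 \le \epsilon\norm{\wh t}_2 + 2\,\sigma_k(\wh t)$, where $\sigma_k$ is the best $k$-term tail. The tail term $\sigma_k(\wh t)$ translates directly into $\min_{\text{rank-}k\,B}\norm{T-B}_F$, since truncating $\wh t$ to its top $k$ entries yields the best rank-$k$ (circulant) approximation, and the $\epsilon\norm{\wh t}_2$ term becomes $\epsilon\norm{T}_F$. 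The constant $2$ in front of the tail term is exactly the standard sparse-recovery blow-up, and combining the $\tilde O(k)$ ruler/FFT budget with the $\tilde O(1/\epsilon^2)$ statistical cost gives the claimed $\tilde O(k/\epsilon^2)$ vector samples and $\tilde O(\sqrt k)$ entries; boosting from the per-run hashing success probability to $2/3$ is a routine median-of-independent-repetitions step.

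The main obstacle I anticipate is controlling the \emph{statistical noise propagation through the randomized sparse-FFT hashing}: the sparse-FFT guarantees are typically stated for exact (or adversarially-bounded) measurements, whereas here each measured Fourier coefficient is a noisy empirical covariance whose error depends on both the sampling noise and the off-grid leakage from the rank-$k$ tail. The delicate part is showing that the random ultra-sparse ruler distributes this noise and tail energy pseudorandomly across hash buckets so that, with the chosen $n = \tilde O(k/\epsilon^2)$, the per-bucket error stays below the detection threshold of the hashing scheme with the required probability — ensuring both correct \emph{support identification} and accurate \emph{magnitude estimation} simultaneously. Tying the concentration of $\hat t_s$ together with the failure probability of hashing (so that a single $2/3$ bound covers both) is where the argument needs the most care.
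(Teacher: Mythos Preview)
Your proposal is correct and matches the paper's approach: reduce to on-grid sparse Fourier recovery of $\hat t$ via the circulant DFT diagonalization, control the sampled covariances by sub-Gaussian concentration, invoke a black-box $\ell_2/\ell_2$ SFT guarantee, and convert via Parseval to the Frobenius bound (with $\sqrt{d}\,\sigma_k(\hat t) = \min_{\rank\text{-}k\,B}\|T-B\|_F$ exactly as you say). The obstacle you anticipate dissolves more easily than you fear: the paper defines an auxiliary vector $w$ that equals the empirical $\bar t$ on the entries the SFT actually reads and equals the true $t$ elsewhere, so the SFT run on $\bar t$ is \emph{identical} to the SFT run on $w$, the black-box guarantee of Theorem~\ref{thm:sft} applies directly to $w$, and all statistical noise enters only through the single additive term $\|w-t\|_2 \le \epsilon\,t_0$ (via $|t_s-\bar t_s|\le \epsilon t_0/\sqrt m$ on $m=\tilde O(k)$ coordinates)---no per-bucket hashing analysis is needed.
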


\paragraph{Random Hashing for Ruler Design.}
Algorithm \ref{alg:alg} (Section \ref{sec:analysis}) is inspired by work on random hashing based sparse Fourier transform methods
\cite{sft,gilbert2014recent}. The idea is to \emph{transform $T$} in way that is equivalent to applying a random hash function ${h}: [0,1] \rightarrow [0,1]$ to the frequencies $f_1,\ldots, f_d$ in $T$'s Vandermonde decomposition. When $T$ is nearly rank-$k$, there may be up to $d$ such frequencies, but only $k$ will significantly contribute to the decomposition. After hashing, we expect the $k$ dominant frequencies to be well separated (without small gaps), and thus recoverable via a frequency finding approach like that proposed by Qiao and Pal. Even if some small gaps remain, by applying repeated random hash functions we can eventually recover all $k$ significant frequencies.

As utilized in sparse Fourier transform methods,
%
 when all frequencies $f_1,\ldots, f_d$ are on-grid integer multiples of $1/d$ (i.e., $T$ is circulant), it suffices to chose $h$ from the family of random hash functions ${h}_{a,b}(x) = a (x-b) \mod d $, where $a, b$ are randomly chosen integers \cite{gilbert2005improved,gilbert2014recent}. $h_{a,b}(x)$ is applied to $x = fd$ when $f$ is an on-grid frequency in $\{0,1/d, \ldots, (d-1)/d\}$, and the hashed frequency is taken as $\frac{h_{a,b}(x)}{d} \in [0,1]$.  
 
 Critically, the random hash function $ {h}_{a,b}$ in frequency domain can be implemented simply via a transformation to $T$. 
 For a random integer $c$, let ${g}_{a,c}(x)  = a (t- c) \mod d$.  If $a$ is coprime to $d$, $ {g}_{a,c}(\cdot)$ is a permutation of $\{0,\ldots, d-1 \}$. Let $T_{a,b,c}$ be a transformed covariance matrix obtained by permuting $T$'s rows and columns with ${g}_{a,c}(x)$ and multiplying the $j,k$ entry by $e^{\frac{2 \pi i ab |j-k|}{d}}$. Let $\tilde{f}_1, \ldots, \tilde{f_d}$ and $\tilde{D}$ denote the frequencies and diagonal matrix in $T_{a,b,c}$'s Vandermonde decomposition. 
 One can  check that $T$'s Vandermonde decomposition can be obtained by setting $f_j = \frac{h_{a,b}^{-1}(\tilde{f_j}d)}{d}$ and $D = P\tilde{D}$ where $P$ is diagonal with $j^{th}$ entry $e^{2 \pi i ac f_j}$. 
 
 
Accordingly, estimating $T$ reduces to estimating $T_{a,b,c}$, which we will do by estimating $T_{a,b,c}$'s top $O(k) \times O(k)$ submatrix and applying a strategy similar to \cite{qiao2017gridless}.
 %
%
   Naively, if the permutation $ {g}_{a,c}(\cdot)$ were truly random, it would destroy the possibility of using a sparse ruler to measure this top submatrix of $T_{a,b,c}$: a general ruler construction is not known for a ruler with the \emph{arbitrary} difference set a random permutation would require. However, by leveraging $ {g}_{a,c}(\cdot)$'s simple structure, we show that we can still construct a ruler to read this submatrix. 
   The ruler is 1) \emph{random}: based on randomly chosen $a,c$ and 2) \emph{ultra-sparse}: measuring the covariance at $O(k)$ random distances using just $O(\sqrt{k})$ entry sample complexity.

\section{Random Ultra-Sparse Rulers}
\label{sec:ultra}

 We start with a simple random ultra-sparse ruler construction that will suffice for circulant matrix estimation.

\begin{definition}[Random Ultra-Sparse Ruler -- Type 1]\label{def:rus}
For any $d$ and $k \le d$, 
let $a,c\in \mathbb Z$ be chosen randomly such that $a$ is coprime to $d$. Let $Q_{a,c} = \{q_1,q_2,\ldots, q_m\}$ be any ruler for the distance set $\{a(0-c), a(1-c), \ldots, a(k-c)\}$ and let $R_{a,c} = \{r_1,\ldots, r_m\}$ where $r_i = q_i \mod d$. We call $R_{a,c}$ a \emph{random ultra-sparse ruler}.
\end{definition}
\begin{lemma}\label{clm:cyclic}
Let $R_{a,c}$ be constructed from any valid $Q_{a,c}$ as in Definition \ref{def:rus} and $ {g}_{a,c}(x) = a(x-c) \mod d$ be the random permutation corresponding to $a,c$. Then the following hold:
\begin{enumerate}
\item $R_{a,c}$ is a \emph{cyclic ruler} for $\{ {g}_{a,c}(0), {g}_{a,c}(1),\ldots,  {g}_{a,c}(k)\}$. I.e., for any $s \in \{0,\ldots,k\}$ there are $r_i,r_j \in R_{a,c}$ with either $r_i - r_j = g_{a,c}(s)$ or $r_i - r_j = d - g_{a,c}(s)$.
\item There exists $Q_{a,c}$, a ruler for the difference set $\{a(0-c), a(1-c), \ldots, a(k-c)\}$, with $|Q_{a,c}| = O(\sqrt{k})$. Correspondingly, $ |R_{a,c}| = O(\sqrt{k})$.
\end{enumerate}
\end{lemma}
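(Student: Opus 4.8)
The plan is to handle the two claims separately. The first is a direct consequence of reducing the defining difference property of $Q_{a,c}$ modulo $d$, while the second requires exhibiting an explicit ruler of size $O(\sqrt k)$ for the affine arithmetic progression $\{a(s-c) : s = 0,\ldots,k\}$.

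For the first claim, I would fix $s \in \{0,\ldots,k\}$. Since $Q_{a,c}$ is a ruler for $\{a(0-c),\ldots,a(k-c)\}$, there are indices $i,j$ with $q_i - q_j = a(s-c)$ as a signed difference (difference sets are symmetric, so realizing a value and its negation are equivalent). Reducing modulo $d$ and using $r_i = q_i \bmod d$, $r_j = q_j \bmod d$, I get
\[
 r_i - r_j \equiv q_i - q_j = a(s-c) \equiv g_{a,c}(s) \pmod d .
\]
Because $r_i,r_j \in \{0,\ldots,d-1\}$ we have $r_i - r_j \in \{-(d-1),\ldots,d-1\}$, whereas $g_{a,c}(s) \in \{0,\ldots,d-1\}$; hence the congruence forces either $r_i - r_j = g_{a,c}(s)$ or $r_i - r_j = g_{a,c}(s) - d$, and in the latter case $r_j - r_i = d - g_{a,c}(s)$. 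Swapping the roles of $i$ and $j$ if necessary yields exactly the stated cyclic-ruler condition.

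For the second claim, the key observation is that $\{a(s-c): s = 0,\ldots,k\} = a\cdot\{-c,\,1-c,\,\ldots,\,k-c\}$ is $a$ times a block of $k+1$ consecutive integers, i.e.\ an arithmetic progression of span $|a|k$ that is \emph{translated} away from the origin by $-ac$. A naive rescaling $aQ_0$ of a classical ruler $Q_0$ for $\{0,1,\ldots,k\}$ only realizes the unshifted progression $\{0,a,\ldots,ka\}$; absorbing the translation by $-ac$ is the only real difficulty, precisely because translating a point set does not change its difference set. I would resolve this with two translated copies: let $Q_0$ be a classical sparse ruler for $\{0,1,\ldots,k\}$ with $|Q_0| = O(\sqrt k)$, and set
\[
 Q_{a,c} = a Q_0 \cup \bigl(a Q_0 + ac\bigr).
\]
Given $s$, choose $u_p,u_q \in Q_0$ with $u_p - u_q = s$; then $a u_p \in a Q_0$ and $a u_q + ac \in a Q_0 + ac$ both lie in $Q_{a,c}$, and their difference is $a u_p - (a u_q + ac) = a(s-c)$, so every required distance is realized. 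Since $|Q_{a,c}| \le 2|Q_0| = O(\sqrt k)$ and reduction modulo $d$ can only merge points, $|R_{a,c}| \le |Q_{a,c}| = O(\sqrt k)$.

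The step I expect to be the crux is the construction in the second claim: because a randomly chosen $c$ may be as large as $d-1$, the target distances form a progression whose offset cannot be assumed small or symmetric about zero, so a single scaled copy of a standard ruler does not suffice. The two-cluster trick is what cleanly decouples the scaling by $a$ (handled by $aQ_0$) from the translation by $ac$ (handled by the second copy). It is also worth verifying the degenerate cases $s=0$ and $g_{a,c}(s)=0$ in the first claim, where the realizing pair may collapse to $r_i = r_j$, which is consistent with realizing distance $0$.
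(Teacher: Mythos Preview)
Your proof is correct. The first claim is argued exactly as in the paper: reduce the defining difference $q_i-q_j=a(s-c)$ modulo $d$ and use the range constraint on $r_i-r_j$ to pin down the two cases.

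For the second claim your construction differs from the paper's, though the underlying idea is the same. You take a black-box sparse ruler $Q_0$ for $\{0,\ldots,k\}$ and form $Q_{a,c}=aQ_0\cup(aQ_0+ac)$, so that a cross-cluster difference $au_p-(au_q+ac)=a(s-c)$ realizes each target. The paper instead builds an explicit two-scale ruler,
\[
Q_{a,c}=\{0,a,\ldots,\lceil\sqrt{k}\rceil a\}\ \cup\ \{\,ia\lceil\sqrt{k}\rceil-ac : i=1,\ldots,\lceil\sqrt{k}\rceil\,\},
\]
where only the coarse half carries the $-ac$ shift and the cross-cluster differences $a(i\lceil\sqrt{k}\rceil-j)-ac$ sweep out all $a(s-c)$ for $0\le s\le\lceil\sqrt{k}\rceil^2$. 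Both constructions exploit the same decoupling you identified---handle the scaling by $a$ structurally and absorb the translation by $ac$ via a relative shift between two clusters---and both give $|Q_{a,c}|=O(\sqrt{k})$. The paper's explicit construction is self-contained and yields the slightly leaner count $2\lceil\sqrt{k}\rceil+1$; your version is more modular in that it works with any off-the-shelf sparse ruler $Q_0$, at the cost of roughly doubling the point count.
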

\begin{proof}
Since $Q_{a,c}$ is a ruler for the distance set $\{a(0-c), a(1-c), \ldots, a(k-c) \}$, for any $s \in \{0,\ldots, k\}$ there is some pair $q_i,q_j$ with $q_i - q_j = a(s - c)$ Thus $q_i - q_j \equiv a(s - c) \ \mod d$ and so $r_i - r_j  \equiv a(s - c) \ \mod d$ and so $r_i - r_j \equiv g_{a,c}(s)\ \mod d$. 
Since $r_i,r_j$, and $g_{a,c}(s)$ are in $\{0,\ldots,d-1\}$, this equivalence can only hold if $r_i - r_j =  g_{a,c}(s)$ or $r_i - r_j = g_{a,c}(s) -d$ and so $r_j - r_i = d - g_{a,c}(s)$. This completes the first  claim.

For the second claim, set $Q_{a,c} = \{0,a,\ldots, \lceil \sqrt{k}\rceil \cdot a\} \bigcup \{a \lceil \sqrt{k} \rceil - ac,2a \lceil \sqrt{k} \rceil -ac,\ldots, \lceil \sqrt{k}\rceil \cdot a \lceil \sqrt{k} \rceil-ac\}$, as shown in Figure \ref{fig:ruler}. 
We can see that $|Q_{a,c}| = |R_{a,c}| \leq 2\lceil \sqrt{k}\rceil +1$. Just considering distances between the first and second halves of the ruler, $Q_{a,c}$'s difference coarray includes $as - ac$ for all nonnegative $s \le \lceil \sqrt{k}\rceil^2$. So $Q_{a,c}$ is an ultra-sparse ruler for the distance set $\{a(0-c), a(1-c), \ldots, a(k-c)\}$, as required.
\end{proof}

\begin{figure}[h]
\centering

\includegraphics[width=0.5\linewidth]{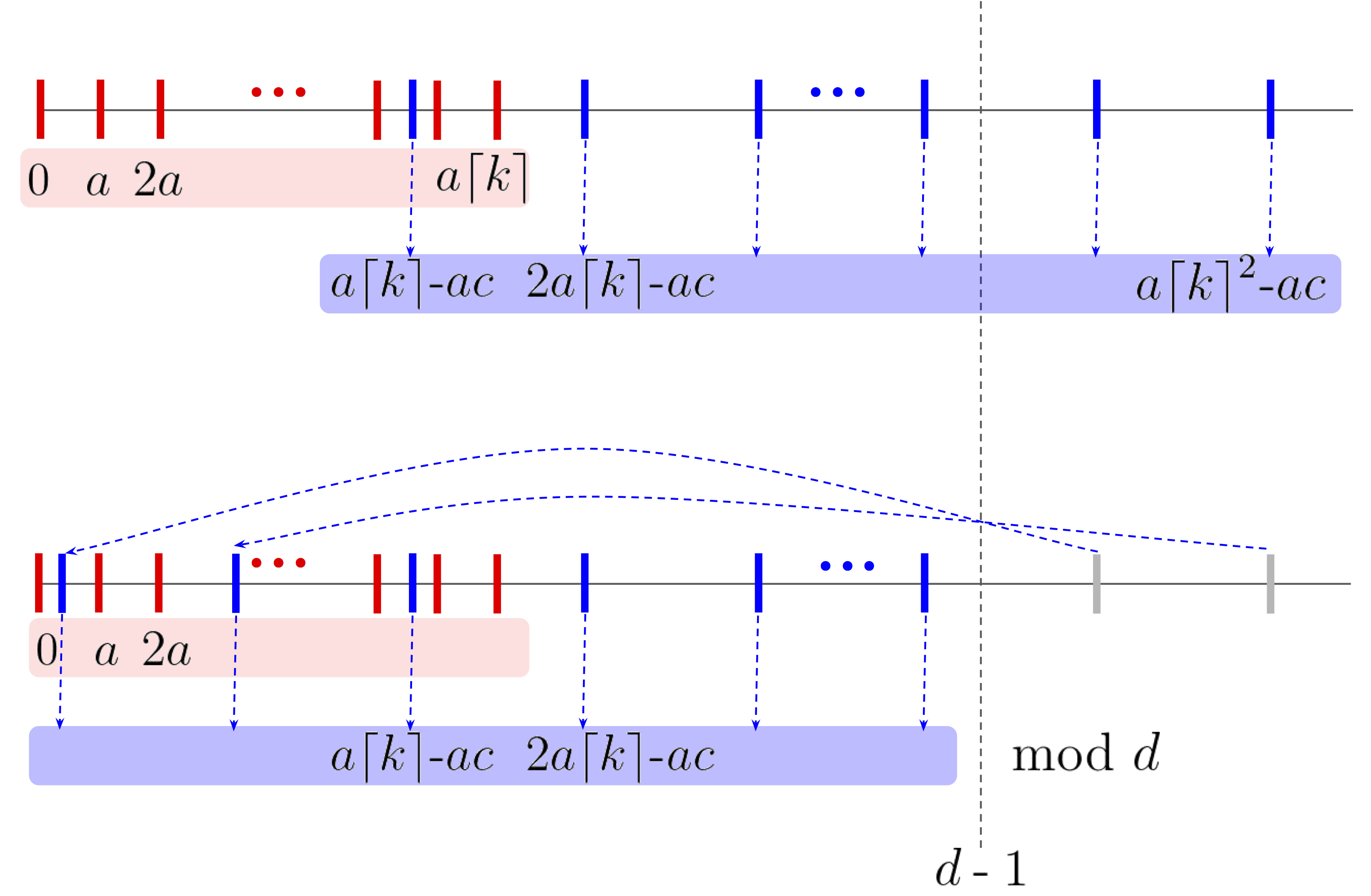}
\caption{Illustration of rulers in Lemma \ref{clm:cyclic}. The top line shows $Q_{a,c}$, with the first set $\{0,a,\ldots, \lceil \sqrt{k}\rceil \cdot a\}$ shown in red and the latter $\{a \lceil \sqrt{k} \rceil - ac,2a \lceil \sqrt{k} \rceil -ac,\ldots, \lceil \sqrt{k}\rceil \cdot a \lceil \sqrt{k} \rceil-ac\}$ in blue. Considering pairwise distances between red and blue markers demonstrates that the difference set of $Q_{a,c}$ is as claimed. Note that the elements of $Q_{a,c}$ may be far greater than $d$, and they may even be negative (for simplicity, in this illustration we assume $\lceil k \rceil > c$). The bottom line visualizes $R_{a,c}$, which is $Q_{a,c}$ ``wrapped around" $\mod d$.}
\label{fig:ruler}

\end{figure}

Note that in a circulant matrix $T$, we have $t_i = t_{d - i}$. Thus a cyclic ruler of the form guaranteed by Lemma \ref{clm:cyclic} suffices to measure the covariance at the full set of random distances $\{{g}_a(0), \ldots,  {g}_a(k)\}$. In Section 3, we will show how this precise structure of difference set is just what's needed by an efficient existing sparse FFT for on-grid frequencies, and we derive corresponding error guarantees for circulant covariance estimation. However, for general Toeplitz matrices (i.e., not cyclically symmetric), we require a true ruler. In this case, we can restrict the range of $a$ to prevent wrap around. For simplicity, in the following definition we also do not implement a random shift $c$.
\begin{definition}[Random Ultra-Sparse Ruler -- Type 2]\label{def:rus2}
For any dimension $d$ and $k \le d$, 
let $a\in \mathbb Z$ be chosen randomly such that $a$ is coprime to $d$ and $a \le bd/k$ for some $b \leq 1$. Let $R_a = \{r_1,r_2,\ldots, r_m\}$ be any ruler for $\{0, a, 2a , \ldots, k a\}$.
\end{definition}
Again, it is clear that we can find $R_a$ with $m = O(\sqrt{k})$. While in this manuscript we do not fully cover how to recover a non-circulant $T$ from a Type 2 ultra-sparse ruler, we give a short sketch here. If we set $k' = O(k)$ and estimate the $k' \times k'$ principal submatrix of $T$ indexed by $\{0, a, \ldots, k'a \mod d\}$, we are equivalently measuring the top-left $k' \times k'$ submatrix of a \emph{transformed} matrix $\tilde{T}$ whose Vandermonde decomposition frequencies are $\{\tilde{f}_1, \ldots, \tilde{f}_d\}$ where $\tilde{f}_j  = a\cdot {f}_j \mod 1$. Ideally, we would estimate the frequencies of $\tilde{T}$, which are separated by larger gaps, and use them to recover the frequencies of $T$. However, this cannot be done directly because there is \emph{ambiguity} in inverting each $\tilde{f}_j$: there are up to $a$ different solutions $f_j \in [0,1]$ to the equation ${f}_j \equiv a\tilde{f}_j \mod 1$, as shown in Figure \ref{fig:ambiguity}.

Fortunately, this issue can be combated with simple repetition. Each time we draw a different random $a$, we collect potential candidate dominant frequencies for $T$'s Vandermonde decomposition. Since we restrict $a \le bd/k$, there will be $a \cdot k \leq b \cdot d$ such candidates: $k$ will be the true dominant frequencies in $T$'s Vandermonde decomposition and the remainder will be nearly random. Roughly, any frequency outside the set of dominant frequencies  will appear in the set with probability $b < 1$. Thus, setting $b$ small enough, after roughly $O(\log d)$ repetitions, by observing which frequencies appear as candidates the largest number of times, we can determine the true dominant $k$ frequencies with high probability. 

\begin{figure}[h]
\centering
\includegraphics[width=0.4 \linewidth]{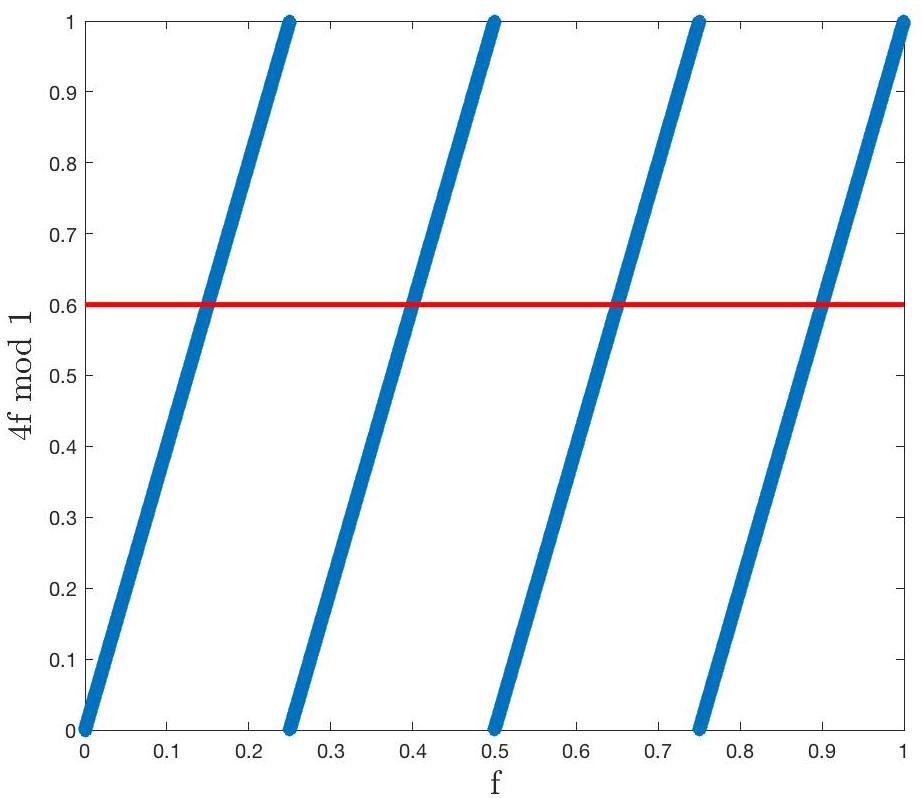}
\caption{An example of the ambiguity induced by permuting, when frequencies are off-grid. Here, we suppose $a=4$ and the recovered (post-permutation) frequency is $\tilde{f}_j=0.6$. As shown, there are $4$ possible ``true" frequencies that may have given rise to $\tilde{f}_j=0.6$. }
\label{fig:ambiguity}

\end{figure}

\section{Analysis for Circulant Covariance}
\label{sec:analysis}

We now apply the random ultra-sparse ruler construction of Definition \ref{def:rus} to circulant covariance matrix estimation. For the remainder of the section let $F \in \C^{d \times d}$ be  the discrete Fourier transform matrix with $F(j,k) = \frac{1}{\sqrt{d}} \cdot  e^{\frac{2\pi i (j-1)(k-1)}{d}}$. For $x \in \R^d$ let $F^* x  = \hat x$ denote its Fourier transform. Let $\diag(x)$ be the diagonal matrix with $x$ on its diagonal, and let $\toep(x)$ be the symmetric Toeplitz matrices with first column $x$. Our algorithm makes blackbox use of a random hashing based sparse Fourier transform (SFT, or sparse FFT), with output guarantees as follows:
\begin{theorem}[Sparse Fourier Transform \cite{sft}]\label{thm:sft} Consider $x \in \R^d$ with  Fourier transform $\hat{x}$.  Assume that  $d$ is a power of $2$. 
 Let $\delta>0$ be a fixed error parameter. Algorithm 4.1 of \cite{sft} $SFT(x)$ 
 outputs $k$-sparse $\hat{z}$ satisfying with $\frac{2}{3}$ probability:
$$ ||\hat{x}-\hat{z}||_2 \leq 2 \min_{\text{k-sparse }y}|| \hat{x}-y||_2 + \delta||\hat{x}||_2.$$
The algorithm reads $O(\log^2 \frac{d}{k})$ blocks of $O\left(k \log \frac{d}{\delta} \right)$ entries. Each block consists of the first entries of $x$ after applying a different random permutation $g_{a,c}(\cdot)$ for $a,c$ chosen uniformly from $\{1,\ldots,d\}$ with $a$ odd (and thus coprime to $d$). 
\end{theorem}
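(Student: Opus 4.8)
The plan is to establish the standard $\ell_2/\ell_2$ sparse-recovery guarantee for a hashing-based sparse FFT, following the template of \cite{sft}. The engine of the algorithm is a \emph{bucketization} procedure that hashes the $d$ Fourier coefficients of $x$ into $B = O(k)$ buckets so that, with constant probability, each of the $k$ largest coefficients lands alone in a bucket while only a controlled amount of ``noise'' from the remaining coefficients leaks in. Concretely, for randomly chosen $a$ (odd, hence coprime to $d$) and $c$, the time-domain reindexing $t \mapsto g_{a,c}(t) = a(t-c) \bmod d$ permutes the frequency indices by $\ell \mapsto a\ell \bmod d$ up to a modulation phase $\omega^{-c j}$ determined by $c$, which is a genuine permutation since $\gcd(a,d)=1$. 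First I would compose this permutation with a flat-window filter $G$: a vector that is $O(k \log(d/\delta))$-sparse in time, so that forming the filtered, permuted signal reads only that many entries of $x$, and whose Fourier transform $\wh{G}$ approximates the indicator of a length-$(d/B)$ interval to within $\delta$ outside a narrow transition band. Multiplying by $G$ in time, subsampling to length $B$, and taking a $B$-point FFT then yields a vector whose $b$-th coordinate is, up to the filter weights, the sum of the permuted coefficients hashing to bucket $b$.

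The next step is the probabilistic isolation argument. Over the random choice of $a$, any two of the top-$k$ coefficients collide in a common bucket with probability $O(1/B) = O(1/k)$, since $a(\ell_1-\ell_2) \bmod d$ is nearly uniform as $a$ ranges over units; a union bound over the $O(k)$ heavy pairs shows that all $k$ heavy coefficients are isolated with constant probability. Separately, I would bound the expected squared $\ell_2$ mass leaking into a fixed bucket from the tail (the $d-k$ smallest coefficients) by $O(1/k)\cdot\big(\min_{k\text{-sparse } y}\norm{\wh x - y}_2^2\big)$, plus a $\delta\norm{\wh x}_2$ term coming from $\wh G$'s deviation from a perfect boxcar. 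These two estimates are what ultimately produce the ``$2\min + \delta$'' form of the guarantee.

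Given isolation, the remaining task is to recover both the location and the value of each heavy coefficient. The value is read off directly from its bucket after dividing out the filter weight. To recover the frequency index in $\{0,\dots,d-1\}$, I would exploit the linear phase $\omega^{-cj}$: repeating the bucketization with $O(\log(d/k))$ independent shifts $c$ and comparing phases across measurements encodes the unknown index bit by bit, which is precisely where the $\log$ factors and the $O(\log^2(d/k))$ block count originate. Running the full procedure a logarithmic number of times and taking, for each recovered location, the median of its value estimates drives the per-coordinate failure probability low enough that a final union bound succeeds with probability $2/3$; summing the per-coordinate squared errors over the recovered support and charging the missed (colliding or light) coefficients to the tail yields $\norm{\wh x - \wh z}_2 \le 2\min_{k\text{-sparse } y}\norm{\wh x - y}_2 + \delta\norm{\wh x}_2$.

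The main obstacle is the joint design and analysis of the window filter $G$. It must be simultaneously sparse in time (to control entry sample complexity) and nearly flat in frequency (so buckets cleanly isolate coefficients), and the leakage through $\wh G$'s transition band must be small enough that the phase-based location recovery stays correct despite noise. Balancing these competing requirements and propagating the resulting filter error through both the isolation bound and the location step is the technically delicate heart of the argument, and is what couples the choice of $\delta$, the block width $O(k\log(d/\delta))$, and the final error bound.
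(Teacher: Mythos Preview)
The paper does not prove this theorem; it is quoted verbatim as a black-box result from the cited sparse FFT reference \cite{sft} and then invoked in Corollary~\ref{cor:cyclic} and Lemma~\ref{lem:5}. There is therefore no ``paper's own proof'' to compare against. Your sketch is a faithful outline of the standard hashing-based sparse FFT argument (random permutation in time induces a frequency permutation, flat-window filter for bucketization, pairwise-collision and tail-leakage bounds, phase-based location recovery with repeated shifts, median aggregation), which is indeed what one finds in \cite{sft}; but for the purposes of this paper the theorem is simply imported, not re-derived.
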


By Lemma \ref{clm:cyclic}, the sparse Fourier transform algorithm of Theorem \ref{thm:sft} can be implemented via random ultra-sparse rulers with low  entry sample complexity in the covariance estimation setting:
\begin{corollary}\label{cor:cyclic}
There is a ruler $R$ with $O\left (\sqrt{k \log\frac{d}{\delta}} \cdot \log^2\frac{d}{k}\right )$ elements measuring all distances required for the algorithm of Theorem \ref{thm:sft} to be applied to the first column of any circulant matrix $T$.
\end{corollary}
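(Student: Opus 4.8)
The plan is to build the ruler $R$ as the union of one random ultra-sparse ruler per block of the sparse Fourier transform, so that the work reduces entirely to combining Theorem \ref{thm:sft} with Lemma \ref{clm:cyclic}. First I would unpack exactly which entries the SFT needs. By Theorem \ref{thm:sft}, the algorithm runs in $O(\log^2 \frac{d}{k})$ blocks, and each block reads the first $k' := O(k \log \frac{d}{\delta})$ entries of the first column of $T$ after applying a fresh permutation $g_{a,c}(\cdot)$. Reading the first $k'$ entries of the permuted column is the same as reading the original column at positions $g_{a,c}(0), g_{a,c}(1), \ldots, g_{a,c}(k'-1)$, i.e.\ estimating the covariances $t_{g_{a,c}(0)}, \ldots, t_{g_{a,c}(k'-1)}$.

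Next I would apply Lemma \ref{clm:cyclic} with parameter $k'$ in place of $k$, using the same $a,c$ that the SFT draws for that block. The lemma furnishes a random ultra-sparse ruler $R_{a,c}$ of size $O(\sqrt{k'}) = O\!\left(\sqrt{k \log \frac{d}{\delta}}\right)$ that is a \emph{cyclic} ruler for $\{g_{a,c}(0), \ldots, g_{a,c}(k')\}$, which covers exactly the distance set this block requires. The cyclic guarantee means each distance $g_{a,c}(s)$ is realized either directly as $r_i - r_j = g_{a,c}(s)$ or in wrapped form as $r_j - r_i = d - g_{a,c}(s)$. This is precisely where circulance enters: since $t_i = t_{d-i}$ for a circulant $T$, a measurement at distance $d - g_{a,c}(s)$ recovers $t_{g_{a,c}(s)}$ exactly, so $R_{a,c}$ suffices to estimate every covariance the block needs.

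Finally I would set $R = \bigcup_{\text{blocks}} R_{a,c}$ and sum the sizes. There are $O(\log^2 \frac{d}{k})$ blocks, each contributing $O\!\left(\sqrt{k \log \frac{d}{\delta}}\right)$ elements, which yields the claimed bound $O\!\left(\sqrt{k \log \frac{d}{\delta}} \cdot \log^2 \frac{d}{k}\right)$. Since these sizes are deterministic in the input parameters, the bound holds for any realization of the permutations, which is all the existential statement of the corollary requires.

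I do not expect a deep obstacle here, since the corollary is essentially the glue between the ruler construction and the black-box SFT; the care goes into the bookkeeping. The two points to verify are that reading ``the first $k'$ entries after permutation $g_{a,c}$'' coincides with the distance set $\{g_{a,c}(0), \ldots, g_{a,c}(k'-1)\}$ that Lemma \ref{clm:cyclic} handles, and that the wrap-around branch of the cyclic ruler is fully absorbed by the symmetry $t_i = t_{d-i}$. One should also note that the random parameters $(a,c)$ used to build each $R_{a,c}$ are coupled to those the SFT uses for the corresponding block, and are realized once so that the single fixed sampling pattern $R$ serves all blocks across all $n$ samples drawn from $\mathcal{D}$.
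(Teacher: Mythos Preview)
Your proposal is correct and follows exactly the paper's approach: take the union of one random ultra-sparse ruler $R_{a,c}$ per SFT block, invoke Lemma~\ref{clm:cyclic} with block size $O(k\log\frac{d}{\delta})$ to get $O(\sqrt{k\log\frac{d}{\delta}})$ elements per block, and multiply by the $O(\log^2\frac{d}{k})$ blocks. Your write-up is in fact more detailed than the paper's, explicitly spelling out why the cyclic wrap-around is harmless for circulant $T$ and that the $(a,c)$ parameters are coupled to the SFT's draws.
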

\begin{proof}
Let $R$ be the union of  random ultra-sparse rulers $R_{a,c}$, each of which measures the entries in a permuted block read by  the algorithm of Theorem \ref{thm:sft} \cite{sft}. By  Lemma \ref{clm:cyclic}, each $R_{a,c}$ is guaranteed to exist with just $O(\sqrt{k \log \frac{d}{\delta}})$ entries and there are $O(\log^2\frac{d}{k})$ blocks. Thus $|R| \leq O\left (\sqrt{k \log\frac{d}{\delta}} \cdot \log^2\frac{d}{k}\right )$.
\end{proof}
With Corollary \ref{cor:cyclic} in place, we present our main algorithm (Algorithm \ref{alg:alg}). Note that in this algorithm, $\bar t_s$ is only  estimated at the $\tilde O(k)$ positions represented by the ruler $R$ (i.e. in the difference set of $R$). Since $SFT(\bar  t)$ only requires reading $\bar t$ at these positions, its output \emph{does not depend on the other positions}.
We have:
\begin{lemma}\label{lem:5}
Consider circulant covariance matrix $T \in \R^{d \times d}$ for a sub-Gaussian distribution $\mathcal{D}$.
Let $t \in \R^d$ be the first  column of $T$ and $\bar t \in \R^d$ be the estimate computed by Algorithm \ref{alg:alg} (line 2). Let $w \in \R^d$ match $\bar t$ on all entries read by SFT and match $t$ elsewhere. Letting $m = O \left (k \cdot \log^2 \frac{d}{k} \cdot \log\frac{d}{\epsilon} \right)$ (then number of entries of $\bar t$ that are read by SFT in line 3), and $n = O \left (\frac{m \sqrt{\log m}}{\epsilon^2} \right )$, we have with probability at least $2/3$, $\norm{t - w}_2 \le \epsilon \norm{T}_2$ and further, Algorithm \ref{alg:alg} outputs $\hat z$ with:
\begin{align*}
\norm{\hat z - 
\hat t}_2 \le 5 \epsilon \cdot  t_0 + 2 \min_{\text{k-sparse }y}|| \hat{t}-y||_2.
\end{align*}
Note that Algorithm \ref{alg:alg} has entry sample complexity $\tilde O(\sqrt{k \log \frac{d}{\epsilon}} \cdot \log^2 \frac{d}{k})$ (see Corollary \ref{cor:cyclic}) and vector sample complexity $n$.
\end{lemma}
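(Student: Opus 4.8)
The plan is to split the claim into a purely statistical estimate and a deterministic reduction to the sparse Fourier transform guarantee of Theorem \ref{thm:sft}.

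\textbf{Step 1 (the bound $\norm{t-w}_2 \le \epsilon\norm{T}_2$).} I would first prove the stronger inequality $\norm{t-w}_2 \le \epsilon t_0$, which suffices since $t_0 \le \norm{T}_2$ for a PSD matrix. The vector $t-w$ is supported only on the $m$ distances read by SFT, and at each such distance $s$ the estimate $\bar t_s$ is an empirical average $\frac1n\sum_{i} x^{(i)}_a x^{(i)}_b$ over the $n$ samples, with $b-a=s$ (using $t_s=t_{d-s}$ for the cyclically-wrapped distances guaranteed by Lemma \ref{clm:cyclic}). Because $\mathcal D$ is sub-Gaussian, each product $x_a x_b$ is sub-exponential with mean $t_s$ and, by the fourth-moment bound together with $\E[x_a^2]=\E[x_b^2]=t_0$, variance $O(t_0^2)$. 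Hence $\E\norm{t-w}_2^2 = \sum_{s}\mathrm{Var}(\bar t_s) = O(mt_0^2/n)$, and a sub-exponential (vector Bernstein) tail bound upgrades this to $\norm{t-w}_2 \le \epsilon t_0$ with high probability once $n = O(m\sqrt{\log m}/\epsilon^2)$; the $\sqrt{\log m}$ factor is the price of the high-probability tail, needed so the failure probability can be union-bounded against the SFT routine.

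\textbf{Step 2 (the SFT guarantee).} I would run SFT with error parameter $\delta = \epsilon/\sqrt d$. Since SFT reads only the positions on which $\bar t$ and $w$ agree, $\hat z = SFT(\bar t) = SFT(w)$, so Theorem \ref{thm:sft} applies to the input $w$ and gives, with probability $2/3$, $\norm{\hat w - \hat z}_2 \le 2\min_{k\text{-sparse }y}\norm{\hat w - y}_2 + \delta\norm{\hat w}_2$. I then transfer this from $w$ to $t$ using that $F^*$ is unitary: $\norm{\hat w - \hat t}_2 = \norm{w-t}_2 \le \epsilon t_0$. Two triangle inequalities give $\min_y\norm{\hat w-y}_2 \le \epsilon t_0 + \min_y\norm{\hat t-y}_2$ and $\norm{\hat w}_2 \le \norm{\hat t}_2 + \epsilon t_0$. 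The key quantitative point is $\norm{\hat t}_2 = \norm{t}_2 \le \sqrt d\,t_0$, since PSD-ness forces $|t_s|\le t_0$; with $\delta=\epsilon/\sqrt d$ this yields $\delta\norm{\hat w}_2 \le 2\epsilon t_0$, while keeping $\log(d/\delta)=O(\log(d/\epsilon))$ so that the block length, and hence $m=O(k\log^2\frac dk\log\frac d\epsilon)$, is unchanged. Collecting the three contributions — $2\epsilon t_0$ from expanding the $\min$, $2\epsilon t_0$ from $\delta\norm{\hat w}_2$, and $\epsilon t_0$ from the final $\norm{\hat w - \hat t}_2$ term — gives exactly $\norm{\hat z - \hat t}_2 \le 5\epsilon t_0 + 2\min_{k\text{-sparse }y}\norm{\hat t - y}_2$. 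The entry sample complexity is then read off from Corollary \ref{cor:cyclic} with $\delta=\epsilon/\sqrt d$, and the vector sample complexity is $n$.

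\textbf{Main obstacle.} The genuinely delicate part is Step 1: obtaining a bound that scales with $t_0$ (not $\norm{T}_2$, which can be as large as $d\,t_0$) and that holds with probability close to $1$, so it can be combined with the $2/3$-probability SFT guarantee. A naive coordinate-wise union bound over the $m$ read distances would cost a full $\log m$ factor in $n$; obtaining the stated $\sqrt{\log m}$ requires concentrating the \emph{sum} of squared sub-exponential deviations rather than controlling each coordinate separately. The remaining work in Step 2 is bookkeeping, but the calibration $\delta=\Theta(\epsilon/\sqrt d)$ together with $\norm{\hat t}_2\le\sqrt d\,t_0$ is precisely what makes the constant come out to $5$ and the entry sample complexity come out as claimed.
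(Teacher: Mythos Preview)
Your Step 2 matches the paper's proof essentially line for line: the same calibration $\delta=\epsilon/\sqrt d$, the same two triangle inequalities, and the same use of $\norm{\hat t}_2=\norm{t}_2\le\sqrt d\,t_0$ to absorb the $\delta\norm{\hat w}_2$ term, arriving at the constant $5$ in the identical way.

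The only divergence is in Step 1. The paper actually takes the route you label ``naive'': it asserts that for $n=O(m\sqrt{\log m}/\epsilon^2)$ one has $|t_s-\bar t_s|\le \epsilon t_0/\sqrt m$ for each of the $m$ read distances with failure probability $O(1/m)$, then union bounds and sums the squares to get $\norm{t-w}_2\le\epsilon t_0$. Your observation that a coordinate-wise sub-exponential Bernstein bound plus union bound more naturally yields a $\log m$ (not $\sqrt{\log m}$) factor is fair; the paper is informal on this point. Your proposed alternative of concentrating $\norm{t-w}_2^2$ directly is a legitimate and arguably cleaner way to reach the stated dependence (indeed, since only constant success probability is needed to combine with the SFT guarantee, even Markov on $\E\norm{t-w}_2^2=O(mt_0^2/n)$ would suffice with no log factor at all). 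So: same overall architecture, with your Step~1 being a mild refinement of what the paper actually writes.
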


 \begin{algorithm}[h]
\caption{{Covariance estimation via ultra-sparse ruler}}
{\bf input}: i.i.d. samples $x^{(1)},\ldots,x^{(n)} \sim \mathcal{D}$ with Toeplitz covariance $T$. Random ultra-sparse ruler $R$ from Corollary \ref{cor:cyclic} with parameters $d,k$, and $\delta = \frac{\epsilon}{\sqrt{k}}$.\\
{\bf output}: $\tilde T \in \R^{d \times d}$ approximating $T$.
\begin{algorithmic}[1]
\State{Let $R(s) := \{(i,j) \in R \text{ s.t. } i-j = s \text{ or }i-j = d-s\}$ .}
\State{Let $\bar t \in \R^{d}$ be given by: for $s$ measured by $R$, let ${\bar t}_s := \frac{1}{n |R(s)|}\cdot\sum_{\ell=1}^n \sum_{(i,j) \in R(s)} x^{(\ell)}_i x^{(\ell)}_j$. Let ${\bar t}_s = 0$ otherwise.}
\State{Compute $\hat z  := SFT(\bar t)$ where $\bar t = [\bar t_0,\ldots, \bar t_{d-1}]$.}
\State{Let $z := F \hat z$.}
\\\Return{$\tilde {T} = \toep(z)$.}
\end{algorithmic}
\label{alg:alg}
\end{algorithm}

\begin{proof}
We  apply  Theorem \ref{thm:sft} to input $w$ with $\delta = \epsilon/\sqrt{d}$,  which requires reading $m =O \left (k \cdot \log^2(\frac{d}{k}) \cdot \log(\frac{d}{\epsilon} )\right) $ entries of $w$.
Since $\mathcal{D}$ is sub-Gaussian and since $t_0$ is the largest entry to $T$ by positive semidefiniteness, for $n = O \left (\frac{m \sqrt{\log m}}{\epsilon^2} \right )$, we have $|t_s - \bar t_s| \le \frac{\epsilon t_0}{\sqrt{m}}$ for each $s$ measured by $R$ with probability $1/\Theta(m)$. By a union bound, this approximation then holds for all $s$ with good probability. We thus have $\norm{w - t}_2 \le \sqrt{m \cdot  \frac{\epsilon^2 t_0^2}{m}} = \epsilon t_0$. By Parseval's theorem, it follows that $\norm{\hat w - \hat t}_2 \le  \epsilon t_0$, and by the triangle inequality: $$\min_{\text{k-sparse }y}|| \hat{w}-y||_2 \le ||\hat{w}-\hat{t}|| + \min_{\text{k-sparse }y}|| \hat{t}-y||_2 \le \epsilon t_0 + \min_{\text{k-sparse }y}|| \hat{t}-y||_2.$$
 By Theorem  \ref{thm:sft} with $\delta = \epsilon/\sqrt{d}$:
\begin{align*}
\norm{\hat z - 
\hat w}_2  \le 2\min_{\text{k-sparse }y}||\hat{w}-y|| + \delta ||\hat{w}|| \leq 2\epsilon t_0 + 2 \min_{\text{k-sparse }y}|| \hat{t}-y||_2 + \frac{\epsilon}{\sqrt{d}} (\norm{\hat t}_2 + \epsilon t_0).
\end{align*}
Again noting that by positive semidefiniteness, $t_0$ is the largest entry in $t$, we have $\frac{\epsilon}{\sqrt{d}} \norm{\hat t}_2 =  \frac{\epsilon}{\sqrt{d}} \norm{t}_2 \le \epsilon t_0$. This gives $\norm{\hat z - \hat w}_2 \le 4 \epsilon t_0 + 2 \min_{\text{k-sparse }y}|| \hat{t}-y||_2$. The claim follows  by  applying the triangle inequality one more time to bound $\norm{\hat z - \hat t}_2 \le \norm{\hat z - \hat w}_2 + \norm{\hat w - \hat t}_2 \le \norm{\hat z - \hat w}_2 + \epsilon t_0$.
\end{proof}
Finally, we prove Theorem \ref{thm:main} by using the above bound on $\norm{\hat z - 
\hat t}_2$ to bound $\norm{T - \tilde T}_F = \norm{\toep(t)-\toep(z)}_F$.

\begin{lemma}
If the bound of Lemma \ref{lem:5} holds:
$
\norm{T - \tilde T}_F\le 5\epsilon \norm{T}_F + 2 \min_{\rank-k\ B} \norm{T - B}_F
$
\end{lemma}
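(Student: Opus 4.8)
The plan is to convert the Fourier-domain $\ell_2$ bound of Lemma~\ref{lem:5} into a Frobenius-domain bound on matrices, exploiting the fact that circulant matrices are diagonalized by $F$. First I would record the basic spectral identity: since $T$ is circulant, $T = F \diag(\sqrt d\, \hat t) F^*$, so its eigenvalues are exactly the entries of $\sqrt d\, \hat t$ (real, because $t$ is real and even) and $\norm{T}_F = \norm{\sqrt d\, \hat t}_2 = \sqrt d\, \norm{\hat t}_2$. More generally, the circulant matrix with first column $v$ has Frobenius norm $\sqrt d\, \norm{\hat v}_2$, where $\hat v = F^* v$.

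The key structural observation is that the returned matrix $\tilde T = \toep(z)$ is itself circulant, so that $T - \tilde T = \toep(t - z)$ is circulant with first column $t - z$ and the identity above yields $\norm{T - \tilde T}_F = \sqrt d\, \norm{\hat t - \hat z}_2$. To justify this, note that $\bar t$ is even, i.e.\ $\bar t_s = \bar t_{d-s}$, because $R(s) = R(d-s)$ by the definition of $R(\cdot)$ in Algorithm~\ref{alg:alg}; hence $\bar t$, and its exact transform, are real and even. Taking $\hat z$ real and even as well --- which we may do without loss of generality, since orthogonally projecting the SFT output onto the subspace of real even vectors does not increase its distance to the real even target $\hat t$ --- makes $z = F \hat z$ real and even, so $\toep(z)$ coincides with the circulant matrix with first column $z$. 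I expect this to be the main obstacle: without the even symmetry one only obtains $\norm{\toep(t-z)}_F \le \sqrt{2d}\, \norm{\hat t - \hat z}_2$, and the stray $\sqrt 2$ would spoil the coefficient $2$ required on the rank-$k$ error term.

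Next I would identify the two terms on the right-hand side of Lemma~\ref{lem:5} with their matrix counterparts. Since $T$ is real symmetric with eigenvalues $\sqrt d\, \hat t$, its singular values are $|\sqrt d\, \hat t_j|$, and the Eckart--Young theorem says the best rank-$k$ Frobenius approximation discards the $d-k$ smallest-magnitude eigenvalues; thus $\min_{\rank-k\ B} \norm{T - B}_F = \sqrt d\, \min_{\text{k-sparse }y} \norm{\hat t - y}_2$. For the first term, every diagonal entry of $T$ equals $t_0$, so $\norm{T}_F \ge \bigl(\sum_j T_{jj}^2\bigr)^{1/2} = \sqrt d\, t_0$.

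Finally I would combine these facts. Starting from $\norm{T - \tilde T}_F = \sqrt d\, \norm{\hat t - \hat z}_2$, I apply the Lemma~\ref{lem:5} estimate $\norm{\hat z - \hat t}_2 \le 5\epsilon\, t_0 + 2 \min_{\text{k-sparse }y} \norm{\hat t - y}_2$, multiply through by $\sqrt d$, and substitute $\sqrt d\, t_0 \le \norm{T}_F$ together with $2\sqrt d\, \min_{\text{k-sparse }y}\norm{\hat t - y}_2 = 2\min_{\rank-k\ B}\norm{T - B}_F$ to arrive at the claimed inequality. All remaining steps are routine; the only delicate point is the even-symmetry argument ensuring $\tilde T$ is exactly circulant.
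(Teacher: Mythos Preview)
Your proposal is correct and follows essentially the same route as the paper: diagonalize the circulant matrices by $F$ to get $\norm{T-\tilde T}_F=\sqrt d\,\norm{\hat t-\hat z}_2$, apply Lemma~\ref{lem:5}, and then identify the two terms via $\sqrt d\, t_0\le\norm{T}_F$ and Eckart--Young. You are in fact more careful than the paper on one point---the paper simply asserts that $\toep(z)$ is circulant, whereas you supply the even-symmetry justification (via $R(s)=R(d-s)$ and a harmless projection of $\hat z$) that makes this rigorous.
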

\begin{proof}
Both $\toep(t)$ and $\toep(z)$ are circulant, and so can be written in their eigendecompositions as $F D F^*$ where $D = \sqrt{d} \diag(F^* t) = \sqrt{d} \diag(\hat t)$ and $\tilde D = \sqrt{d} \diag(F^* z) = \sqrt{d}\diag(\hat z)$. Thus:
\begin{align*}
\norm{\toep(t) - \toep(z)}_F  &=  \norm{FDF^* - F\tilde{D}F^*}_F\\
&= \norm{ D - \tilde D}_F \\
&= \sqrt{d}\norm{\diag(F^*t)-\diag(F^*z)}\\
&= \sqrt{d}\norm{\hat t- \hat z}_2 \le \sqrt{d} \cdot 5 \epsilon \cdot  t_0 + \sqrt{d} \cdot 2 \min_{\text{k-sparse }y}|| \hat{t}-y||_2,
\end{align*}
where the last bound follows from Lemma \ref{lem:5}.
We can see that $\sqrt{d} \cdot 5\epsilon t_0 \le 5\epsilon \norm{T}_F$.  Further, the  best rank-$k$ approximation of $T$ is given by projecting onto its top $k$-eigenvectors (equivalently, setting to zero all but the largest $k$ entries of $D$ to obtain $D_k$, or approximating $\hat t$ with its best $k$-sparse approximation, $t_k$). We thus have:

\begin{align*}
\sqrt{d} \cdot \min_{\text{k-sparse }y}|| \hat{t}-y||_2 = \sqrt{d} \cdot ||\hat{t}-\hat{t_k}|| = \norm{D-D_k} = \norm{FDF^*-FD_kF^*} = \min_{\rank-k\ B} \norm{T - B}_F,
\end{align*}
which yields 
$$\norm{\toep(t) - \toep(z)}_F   \le 5\epsilon \norm{T}_F + 2 \min_{\rank-k\ B} \norm{T - B}_F,$$
completing the proof.
\end{proof}

%

%

\section{Experimental Validation}
\label{sec:experiments}

We conclude by experimentally evaluating the driving intuition behind random ultra-sparse rulers: when there is a small frequency gap in $T$'s Vandermonde decomposition, it can be very advantageous to randomly permute the frequencies to remove this gap. To do so, we generate a low-rank, positive semidefinite real Toeplitz matrix with on-grid but clustered frequencies, add entrywise noise $\eta \sim N(0,\nu)$ and apply the following simple reconstruction procedure. Given a subset of noise-corrupted measurements of $T$'s first column $t$, we use the \texttt{pmusic} and \texttt{findpeaks} functions in Matlab to identify $k$ estimated frequencies $\tilde{f}_1,\dots,\tilde{f}_{k}$, and solve the appropriate linear regression problem to recover diagonal $\tilde D$ so that $T$ is approximated by $\tilde T = \tilde{F}\tilde D\tilde{F}^*$, where $\tilde{F}$ is the $n \times k$ Fourier matrix corresponding to $\tilde{f}_1,\dots,\tilde{f}_k$. We note that this simple reconstruction approach matches that of \cite{qiao2017gridless} \emph{up to a preliminary denoising step}. This step could be applied to all sampling schemes and should preserve their relative performance. The sampling schemes compared are: 
\begin{enumerate}
 \item \emph{First $O(k)$ samples}: Input to \texttt{pmusic} $4k$ noisy estimates of $t_0,\dots,t_{4k-1}$, which can be measured from samples $x^{(i)} \sim \mathcal{D}$ via existing ruler constructions with entrywise sample complexity $O(\sqrt{k})$. This corresponds to the approach of \cite{qiao2017gridless}.
 
 \item \emph{Permuted $O(k)$ samples}: Input to \texttt{pmusic} $4k$ noisy samples of $t_{g^{-1}_{a,c}(0)},\ldots,t_{g^{-1}_{a,c}(4k-1)}$, where $g_{a,c}$ is as described in Section \ref{sec:contributions}. For simplicity we take $c = 0$.
These samples can be obtained with entrywise sample complexity $\tilde{O}(\sqrt{k})$ using a random ultra-sparse ruler by Lemma \ref{clm:cyclic}. 

\item \emph{All samples}: As a baseline, input to \texttt{pmusic} all $d$ noisy measurements of $t_0,\dots,t_{d-1}$. 
\end{enumerate}

Experimental results and validation are shown in Figure 2. As expected, sampling scheme (3) (which requires $O(\sqrt{d})$ entrywise sample complexity) performs best, but is closely followed by our proposed permutation-based sampling method. More elaborate reconstructions following the full algorithm of \cite{sft} would likely improve further on this simple algorithm. Nonetheless, it is clear that when T is circulant with some frequencies clustered, the permutation approach enabled by random ultra-sparse rulers can vastly improve robustness to noise while retaining low, $\tilde{O}(\sqrt{k})$ entrywise sample complexity.


\begin{figure}[htb]
\label{fig:results}
\begin{minipage}[b]{1.0\linewidth}
  \centering
  \centerline{\includegraphics[width=10cm]{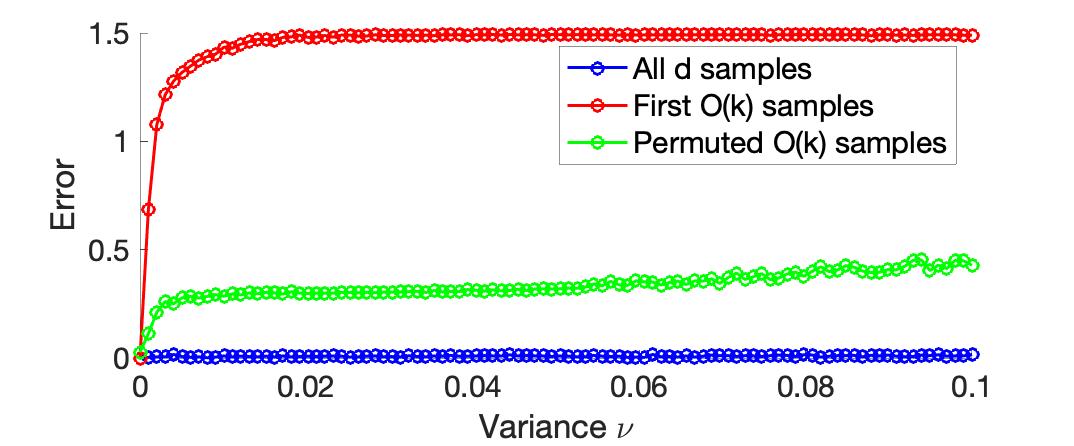}}
  \centerline{(a) Estimation error as a function of noise variance.}\medskip
\end{minipage}
\begin{minipage}[b]{.49\linewidth}
  \centering
  \centerline{\includegraphics[width=7cm]{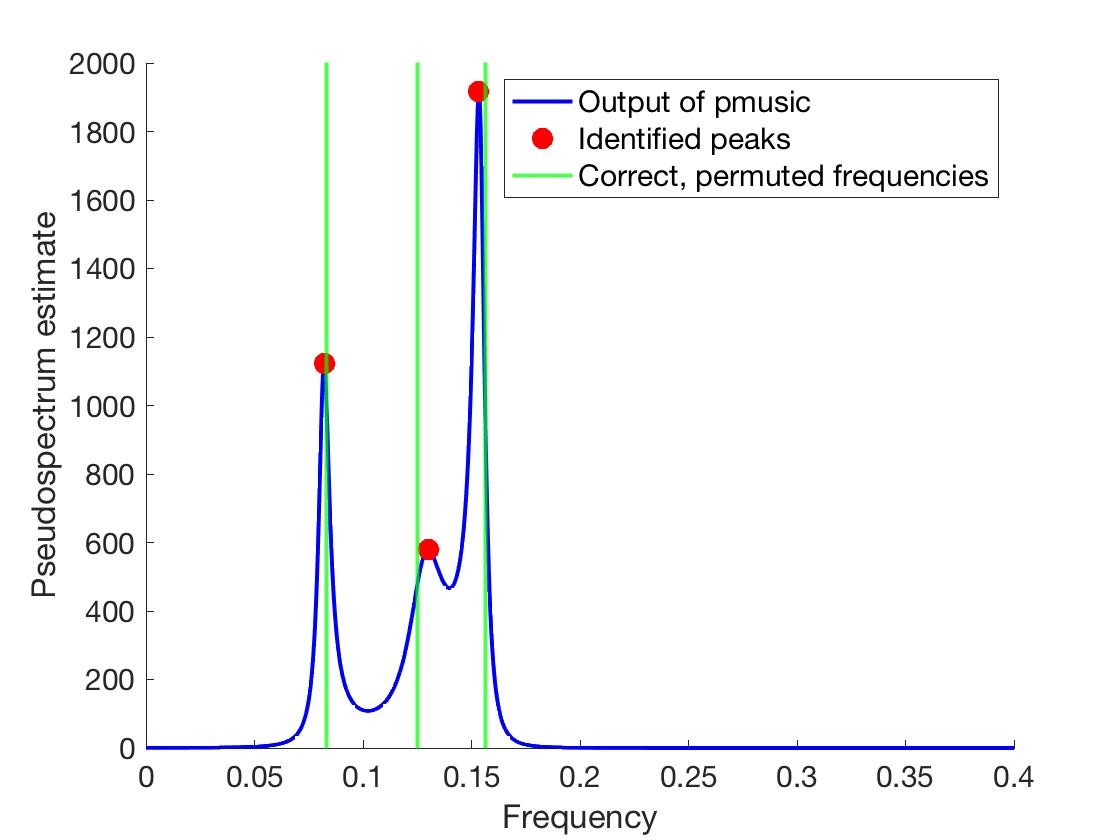}}
  \centerline{(b) With permutation}\medskip
\end{minipage}
\hfill
\begin{minipage}[b]{0.49\linewidth}
  \centering
  \centerline{\includegraphics[width=7cm]{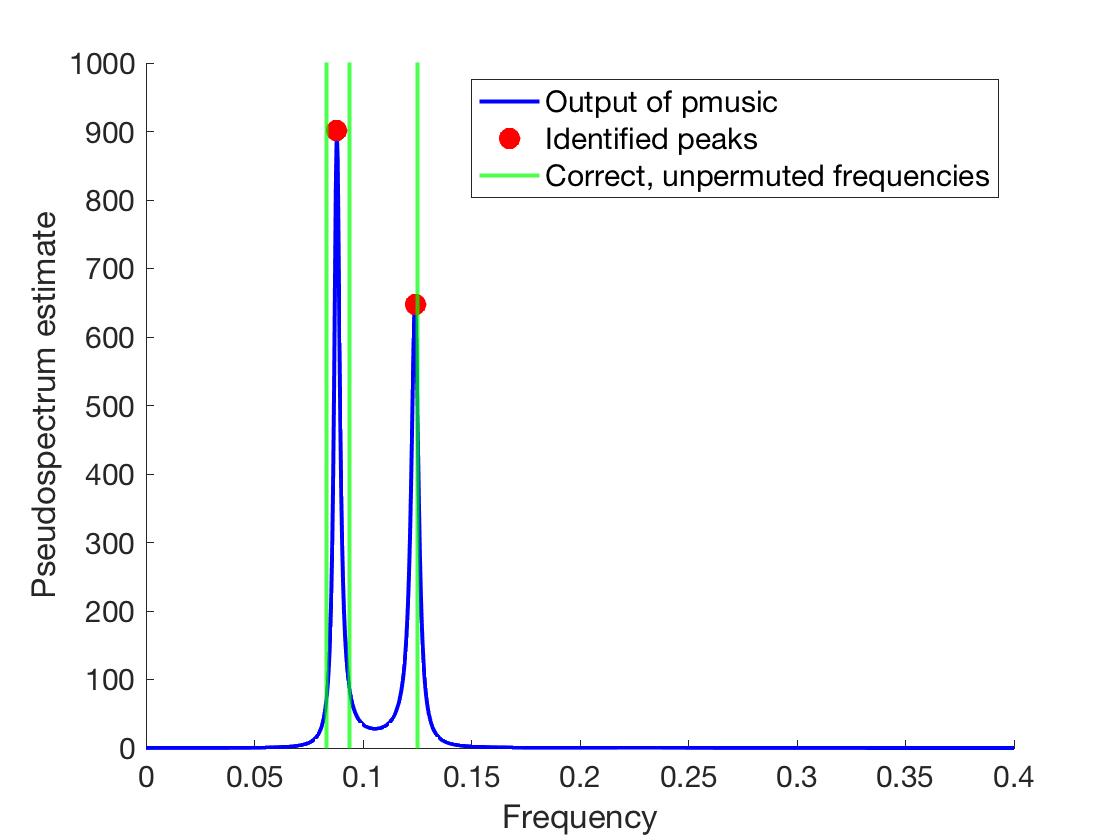}}
  \centerline{(c) Without permutation}\medskip
\end{minipage}
\caption{Normalized estimation error $\frac{||t-\tilde{t}||_2}{||t||_2}$ of different ruler-enabled sampling schemes as a function of the  variance $\nu$. Here, $k=6$ and $d=2400$, with minimum frequency gap $\approxeq 0.01 < \frac{1}{k}$. Results are averaged over 10 random permutations, each of which is further averaged over 20 trials.  In subfigures (b) and (c), we demonstrate for a single iteration at $\nu=0.5$ how nearby frequencies are conflated without permutation, but likely to be separated and accurately identified with a permutation. As the frequencies are symmetric (to ensure T is real), only the first $\frac{k}{2}=3$ are shown.}
\label{fig:res}
\end{figure}

\subsection*{Acknowledgements}

We thank Yonina Eldar for many valuable conversations on sparse rulers and covariance estimation. We also thank Piya Pal for bringing the related  work of \cite{qiao2017gridless} to our attention.


\vfill\pagebreak

\bibliographystyle{IEEEbib}
\bibliography{ICASSPsqrtk}

\end{document}